\documentclass{article}
\usepackage[utf8]{inputenc}
\usepackage{amssymb}
\usepackage{amsmath}
\usepackage{amsthm}
\usepackage{amsfonts}
\usepackage{eucal}
\usepackage{algorithm,algorithmic}
\usepackage[letterpaper, margin=1in]{geometry}
\usepackage{graphicx}
\graphicspath{ {./images/} }
\usepackage{mathtools}
\usepackage{natbib}
\bibliographystyle{plainnat}
\DeclarePairedDelimiter\ceil{\lceil}{\rceil}

\newtheorem{theorem}{Theorem}

\newtheorem{lemma}{Lemma}

\title{Colonel Blotto Game: An Analysis and Extension to Networks\footnote{I thank Kevin Ren for introducing me to the Colonel Blotto game, and Prof. Daniel Kane for reading and offering valuable suggestions on an earlier draft. This project was originally conceived and executed while I was in eighth grade for submission to the 2021 Greater San Diego Science and Engineering fair.}}
\author{Sidarth Erat\footnote{La Jolla High School, La Jolla, email: sidarth.erat@gmail.com}}
\date{\today}
\date{}
\usepackage{setspace}
\doublespacing

\begin{document}

\maketitle
\begin{abstract}
    The Colonel Blotto game, introduced by Borel in the 1920s, is often used for modeling various real-life settings, such as elections, lobbying, etc. The game is based on the allocation of limited resources by players to a set of fields. Each field is ``won'' and a corresponding field-specific value is obtained by the player who sends the most resources. 
    In this paper, we formulate a discrete Blotto game played on a general \textit{accessibility network} (i.e., the bipartite graph made of players and the fields they can allocate resources to). The primary goal is to find how the topology of the accessibility network controls the existence and uniqueness of equilibrium allocations, and how it affects the fraction of fields that are entered and the average payoff of players at equilibrium. 
    
    We establish that, in a 2-regular topology, when the values of fields are close enough and the number of players is not a multiple of 4, then there is a unique equilbrium. We also prove that players are better off and fields are more likely to be entered in a regular topology than a random topology. We find numerically that dispersion of field weights negatively affects average player payoff. The main contribution is a framework for analyzing contests where players are permitted access to some (but not necessarily all) venues of competition.
    
\end{abstract}

\large

\section{Introduction}
Multiplayer resource allocation contests are ubiquitous in daily life. 
The Colonel Blotto game, introduced by Borel almost one hundred years ago, is one of the most basic models of resource allocation contests. 



As an example, consider a population of cheetahs roaming over a large domain. In the classical Blotto model, cheetahs allocate their energy to hunting gazelles in various locations, and the cheetah that allocates the most energy in a given location gets the gazelle in that location. However, the classical model discounts the possibility that not every location is accessible to every cheetah. Specifically, a cheetah might be able to hunt only in their ``neighborhood'' hunting grounds.

There is a natural way to correspond these restrictions on players' movement to an ``accessibility network,'' which is a bipartite graph with the set of players as one part and the fields as the another part. This paper aims to find how these network restrictions on which venues can be accessed by each player affect the equilibria of the game and ``field survival rate'' (in the phrasing of our example, the number of locations where no cheetahs chooses to hunt). 

In order to focus our attention to the network structure, we consider a discrete Boolean Blotto game (i.e., each player has a single resource that cannot be split between fields). We first focus on 2-regular networks and give conditions for the existence of unique Nash equilibria. In this case, it is also found that at equilibrium, greater heterogeneity in field weights leads to lower average payoffs and higher survival rates. We then direct our attention to comparing and contrasting k-regular networks with randomly generated networks, with the additional assumption that fields have the same weight. In this case we find that in a random topology, survival rate is higher and average payoffs are lower.

The rest of the article is organized as follows: In \S\ref{lit-review} we offer a review of the literature. We introduce and formalize the model in \S\ref{model}. \S\ref{2-regular-case} solves the case of a 2-regular network, and \S\ref{random-vs-regular} offers the comparison between random and regular network topologies. In \S\ref{discussion-and-conclusions}, we discuss the results.

\section{\label{lit-review} Literature Review}

The Colonel Blotto game has been extensively researched for a century, and the literature is vast. So, we offer a selective review, focusing on a succession of key studies that illustrate the gradual development and refinement of the Colonel Blotto model.

The Colonel Blotto game was originally introduced by Emile Borel in the early 20th century. The game he analyzed involved 2 players, each with a fixed amount of resources. The players allocate these resources to battlefields of different weights, each won by the player that sends the most resources to it. Each player tries to maximize the total weight of all fields won. The game, which, on the surface, appears quite simple, has still not been fully solved for arbitrary weights and resource levels. Below are several variants of the game analyzed over the last 100 years:

Before the development of formal modern game theory, in 1921, \cite{borel1953theory} established that the discrete 2-player Colonel Blotto (where both players have $B$ discrete resources) has a pure strategy solution for all $B < 7$. In the 1920's, Borel and Ville further analyzed the game, presenting mixed strategy equilibria (i.e., where strategies are probability distributions over the set of possible plays) for the two-player, equal weight, equal-resource case of the continuous Colonel Blotto game. Borel is responsible for the ``hex solution,'' an equilibrium based on the random picking of a strategy $(x_1, x_2, x_3)$ from a recursively-defined hexagonal lattice.

In a memorandum of \cite{grossman1950}, the 2 field case and the 3 field, equal resource case was fully solved. This was the first step in finding equilibria of two player Blotto games.

\cite{roberson2006colonel} fully solves the equal weight case of two person Blotto, assuming each person can access every venue of competition. Its major methodological contribution is the use of copulas to couple marginal equilibrium distributions, which can be done by applying Sklar's theorem.



\cite{boixadserà2021multiplayer} analyzes multiplayer Blotto games in general. The paper focuses on a Boolean Blotto game, somewhat similar in structure to the variant used in this paper. The resources sent to any given field in Boolean Blotto are always in the set $\{0,1\}$. This is the first significant paper that fully solves a Boolean variant of multiplayer Blotto. However, the paper uses a symmetric variant, rather than one with variable field valuations and/or restrictions on troop movement, the latter of which is the primary focus of the current study.



To summarize, in Borel and Ville's original analysis of the game, solutions were introduced for very specific cases, including the equal weight, equal resource case. \cite{grossman1950} introduced full solutions for the 2 field, 2 player case, along with the 2 player, 3 field, equal resource case. Later, \cite{roberson2006colonel} utilized copulas to combine marginal equilibrium distributions for the 2 player, equal-weight case into a viable multivariate distribution. Recent research has found fairly elementary equilibria for certain subcases of the game, including the multiplayer variant (both continuous and Boolean). A solution for equilibria of the Boolean variant is given in \cite{boixadserà2021multiplayer}. Through extensive analysis over the decades, significant progress has been made both in the full solution of the 2-player case of Colonel Blotto's game and the solution of various discrete and/or multiplayer Blotto variants. The current study adds to this literature by considering an orthogonal dimension, namely possible constraints on the players ability to access different fields, and takes a first cut at understanding how the ``accessibility network'' plays a role in determining the equilibrium outcomes.

\section{\label{model}The model}

 Consider a Blotto game on $n$ players (which are henceforth called ``cheetahs'') and $m$ fields (which are henceforth called ``gazelles''). Each cheetah has a finite "hunting budget" (the total amount of effort it can exert hunting).
Let these cheetahs have hunting budgets $b_1, b_2, \cdots b_n$, and let the gazelles have weights $w_1, w_2, \cdots w_m$.

Define the \textit{accessibility matrix} to be an $n \times m$ matrix with entries $a_{ij}$ such that $a_{ij} \in \{0,1\}$, and $a_{ij}=1$ iff the $i$th cheetah is connected to the $j$th gazelle (i.e., $j$th gazelle is accessible to cheetah $i$).

Furthermore, let the cheetah $i$'s budget allocation be represented by $\vec{x_i} \in \mathbb{R}^m$ with entries $x_{i1}, x_{i2}, \dots x_{im}$ such that:
\begin{enumerate}
    \item[(1)] $\sum_{j=1}^m x_{ij} \leq b_i$
    \item[(2)] $0 \leq x_{ij} \leq b_i a_{ij}$
\end{enumerate}
with the interpretation that $x_{ij}$ is the budget allocated by cheetah $i$ to gazelle $j$.

We assume that the cheetah that exerts the most effort hunting a specific gazelle wins that gazelle. That is, gazelle $j$ is won by cheetah $i$ if $x_{ij}>0$ and for all $p \neq i$, $x_{ij} > x_{pj}$. If there is a tie for highest effort among $k$ cheetahs, each cheetah gets value $w_jr^k$, for a fixed constant $r$ (where $0\le r <1$).

In other words, the $i$th cheetah has total winnings:
$$\pi(i) = \sum_{j=1}^m w_j \prod_{p \neq i} g(i, j, p)$$ 
where $$g(i, j, p) = \begin{cases}
    0  \text { if  } x_{ij} < x_{pj} \\
     1 \text{ if } x_{ij} > x_{pj} \\
       r \text { if  }x_{ij} = x_{pj}, \text{ for some fixed }  0 < r < 1 \\
       \end{cases}$$
The cheetahs play mixed strategies over the set of all possible $\bar{x_i}$. In other words, let $$X_i = \{ \vec{x_i} | 0\leq x_{ij} \leq b_i a_{ij} ; \sum_{j=1}^m x_{ij} \leq b_i \} $$ Hence, a strategy for cheetah $i$ is a distribution over the set $X_i$.

We shall consider only the case where $b_i =1$ for all $i$, and additionally assume that $x_{ij} \in \{0,1\}$. More specifically, this paper focuses only on a discrete Boolean Colonel Blotto variant (i.e., each player has only one unit of resources and splitting of resources between fields is disallowed).

\section{\label{2-regular-case}Case of 2-regular topology}

Firstly consider the case of a 2-regular topology, specifically a network topology where $m=n$ and, for each $i$, cheetah $i$ can hunt only gazelles $i$ and $i+1$ (indices are taken mod $n$). In this case, every mixed strategy of cheetah $i$ can be represented as some number $p_i$ such that $0 < p_i < 1$, so that cheetah $i$ goes to gazelle $i$ with probability $p_i$ and goes to gazelles $i+1$ with probability $1-p_i$.
For any mixed strategy equilibrium, the expected payoff from playing either pure strategy must be the same, or cheetah $i$ would lower its expected payoff by mixing. 

We first characterize how interior point mixed strategy equilibria should look if they exist.

\begin{lemma}
If the tuple of strategies $(p_1, p_2, \dots p_n)$ is an interior point equilibrium  (i.e., equilibrium where $0<p_i<1$), then for all $i$,  $$ w_ip_{i-1} + w_{i+1}p_{i+1} = 2w_{i+1}-w_i$$ 
Note: Non-interior point equilibria need not satisfy the set of equations.
\end{lemma}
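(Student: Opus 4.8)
The plan is to invoke the standard indifference characterization of interior mixed equilibria and then compute the two pure-strategy payoffs of cheetah $i$ explicitly. The first observation is structural: in the 2-regular cyclic topology, gazelle $i$ is accessible only to cheetah $i$ and cheetah $i-1$, so in the payoff $\pi(i)$ the product $\prod_{p\neq i} g(i,i,p)$ collapses to the single factor coming from $p=i-1$ (every other cheetah sends $0$ there and is automatically beaten). Thus each of cheetah $i$'s two pure strategies is a head-to-head contest against exactly one neighbor.

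Next I would write down the indifference condition. Because $0<p_i<1$, cheetah $i$ places positive probability on both of its pure strategies, and a mixed strategy can be a best response only if both pure strategies yield the same expected payoff; otherwise cheetah $i$ would strictly gain by shifting all of its mass onto the better one. So the equilibrium requirement for cheetah $i$ is that ``go to gazelle $i$'' and ``go to gazelle $i+1$'' have equal expected value. Going to gazelle $i$, cheetah $i$ meets cheetah $i-1$ there precisely when cheetah $i-1$ declines its own gazelle, which happens with probability $1-p_{i-1}$; going to gazelle $i+1$, cheetah $i$ meets cheetah $i+1$ there precisely when cheetah $i+1$ takes its own gazelle, with probability $p_{i+1}$. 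Using that a solo hunter wins the full weight while a two-way tie awards each cheetah half the weight (the value $\tfrac12 w_j$, i.e.\ $r=\tfrac12$), the two expected payoffs are
\[
w_i\!\left[p_{i-1}+\tfrac12(1-p_{i-1})\right]
\quad\text{and}\quad
w_{i+1}\!\left[(1-p_{i+1})+\tfrac12 p_{i+1}\right].
\]

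Setting these equal, multiplying through by $2$, and collecting terms would give $w_i(p_{i-1}+1)=w_{i+1}(2-p_{i+1})$, which rearranges to the claimed identity $w_ip_{i-1}+w_{i+1}p_{i+1}=2w_{i+1}-w_i$. I expect the only genuine pitfall to be bookkeeping: correctly matching each pure strategy to its unique contesting neighbor and, in particular, respecting the index asymmetry whereby the left neighbor contests gazelle $i$ with probability $1-p_{i-1}$ while the right neighbor contests gazelle $i+1$ with probability $p_{i+1}$. It is also worth flagging that the dependence on $r$ cancels cleanly only because the two-way tie splits the prize evenly; with a general tie value the indifference equations would retain an extra constant, so the stated $r$-free form implicitly fixes the two-way tie payoff at $\tfrac12 w_j$.
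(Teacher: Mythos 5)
Your proof is correct and takes essentially the same route as the paper's: both invoke the indifference condition for an interior mixed equilibrium and equate the two head-to-head pure-strategy payoffs $w_i\left[p_{i-1}+\tfrac12(1-p_{i-1})\right]$ and $w_{i+1}\left[(1-p_{i+1})+\tfrac12 p_{i+1}\right]$. If anything, your bookkeeping is more careful than the paper's (which misprints $p_{i-1}$ as $p_i$ in the gazelle-$i$ payoff), and your closing remark that the $r$-free form of the equation implicitly requires a two-way tie to pay $\tfrac12 w_j$ (i.e.\ $r=\tfrac12$) makes explicit an assumption the paper uses silently.
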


\begin{proof} Note that if some tuple $(p_1, p_2, \dots p_n)$ \footnote{indices are taken mod $n$} is an equilibrium, then for every $i$, the expected payoff of exerting all its effort hunting gazelle $i$ is:
$$ w_i \left( 1 * p_{i-1} + \frac{1-p_i}{2} \right) = \frac{w_i}{2} (1+ p_i)$$
The expected payoff of spending its entire budget on gazelle $i+1$ is:
$$ w_{i+1} \left( 1- \frac{p_{i+1}}{2} + 1 * (1- p_{i+1}) \right) $$
 Equating these two expressions, for all $i$,
$$ w_ip_{i-1} + w_{i+1}p_{i+1} = 2w_{i+1}-w_i$$ \end{proof}


The above lemma says nothing about the existence of interior point equilibria. So, we now analyze conditions for the existence and uniqueness of equilibria. To do this, we first prove the following auxiliary lemma.

\begin{lemma}
    Let $M$ be an $n\times n$ matrix such that $a_{ij}=1$ if $i=j-1$ or $i=j+1$, and $a_{ij}=0$ otherwise. Then, $M$ is invertible if and only if $n$ does not divide 4
\end{lemma}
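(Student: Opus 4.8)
The plan is to first settle an ambiguity in the statement, because the claimed conclusion is what tells us which matrix is actually meant. Read with indices strictly in $\{1,\dots,n\}$ and no wrap-around, the prescription ``$a_{ij}=1$ if $i=j-1$ or $i=j+1$'' describes the tridiagonal adjacency matrix of the path $P_n$ (zero diagonal, ones on the two adjacent diagonals). For that reading I would invoke the standard tridiagonal three-term determinant recurrence, which here collapses to $D_n=-D_{n-2}$ with $D_0=1$ and $D_1=0$, giving $D_{2k}=(-1)^k$ and $D_{2k+1}=0$. So the path matrix is invertible exactly when $n$ is even. Since this does \emph{not} agree with any divisibility-by-$4$ condition, the path graph cannot be the intended object, and I flag this explicitly rather than quietly substituting a different matrix.

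The discrepancy is resolved by the convention already in force in this section, that ``indices are taken mod $n$'' (used in Lemma~1 and in the subscripts $p_{i-1},p_{i+1}$). Under that convention the clause ``$i=j-1$ or $i=j+1$'' also fires at the corners: for $j=1$ one has $j-1\equiv n$, forcing $a_{n1}=1$, and symmetrically $a_{1n}=1$. Hence $M$ is the adjacency matrix of the \emph{cycle} $C_n$, a real symmetric circulant with first row $(0,1,0,\dots,0,1)$; I record these two corner entries explicitly precisely so the wrap-around is a consequence of the stated mod-$n$ convention, not an unannounced addition. I also read the lemma's phrase ``$n$ does not divide $4$'' as the condition used in the abstract and in the main theorem, namely ``$n$ is not a multiple of $4$,'' i.e.\ $4\nmid n$.

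The heart of the proof is then the spectrum of this circulant. Its eigenvalues are $\lambda_k=\omega^{k}+\omega^{-k}=2\cos(2\pi k/n)$ for $k=0,1,\dots,n-1$, where $\omega=e^{2\pi i/n}$, so $\det M=\prod_{k=0}^{n-1}\lambda_k$. Thus $M$ is singular iff some factor vanishes, i.e.\ iff $\cos(2\pi k/n)=0$ for an integer $0\le k<n$; I would show this forces $k=n(2\ell+1)/4$, which is an admissible integer for some $\ell$ exactly when $4\mid n$ (then $k=n/4$ and $k=3n/4$), and is impossible otherwise. Concluding, $M$ is invertible iff $4\nmid n$, i.e.\ iff $n$ is not a multiple of $4$, as claimed. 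The main obstacle is this last divisibility bookkeeping: arguing carefully that the odd multiplier $2\ell+1$ lets $n(2\ell+1)/4$ hit an integer in range \emph{exactly} when $4\mid n$. A cleaner alternative I would keep in reserve avoids cosines entirely: a factor $\omega^k+\omega^{-k}$ vanishes iff $\omega^{2k}=-1$, and writing $\omega^{2k}=e^{4\pi i k/n}$ this holds for some $k$ iff $4k/n$ can be an odd integer, i.e.\ iff $4\mid n$; this recasts the whole criterion in terms of roots of unity and sidesteps the trigonometric case analysis.
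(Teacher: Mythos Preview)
Your proof is correct, and your reading of the matrix is the right one: the surrounding section works with indices mod $n$, so $M$ is the adjacency matrix of the cycle $C_n$, a symmetric circulant. Your spectral argument is clean and complete; the alternative via $\omega^{2k}=-1$ is indeed the tidiest way to finish.

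The paper, however, takes a genuinely different and more elementary route. For $4\mid n$ it exhibits an explicit linear dependence by two-colouring the columns according to $i\bmod 4$ (red for $i\equiv 1,2$, blue for $i\equiv 0,3$) and observing that each row contains exactly one red and one blue nonzero entry, so the alternating signed sum of columns vanishes. For $4\nmid n$ it argues directly on the null space: a dependence $\sum_i t_i\vec{x}_i=0$ forces $t_{i-1}+t_{i+1}=0$ for every $i$, hence $t_{i}=t_{i+4}$; then a short case split on $n\bmod 4$ (the three residues $1,2,3$) chases these equalities around the cycle to conclude $t_i=-t_i$ and hence $t_i=0$. Your circulant-eigenvalue approach is slicker and conceptually uniform (no case split), and it generalises immediately to any circulant first row; the paper's argument is entirely self-contained and avoids invoking the DFT diagonalisation of circulants, at the cost of the mod-$4$ casework. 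Either buys the result honestly.
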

\begin{proof}
If $4|n$, color column $\vec{x_i}$ red if $i \equiv 1, 2 \pmod 4$, and blue if $i \equiv 0, 3 \pmod 4 $.

In every row, with adequate rotation of columns to preserve color, the nonzero entries form a block: $[1 \text { } 0 \text { } 1]$. Clearly, however this is placed, each 1 must be colored differently. Hence, the sum of the blue colored entries is equal to the sum of the red colored entries. Therefore, the columns are linearly dependent and A is not invertible. 

Now consider the case where $n \not \equiv 0 \pmod 4$. Now let $t_1, t_2, \dots$ be some set of real numbers such that $\sum_{i=0} t_i\vec{x_i} = \vec{0}$. Note that by Lemma 1, for all $i$:
$$t_{i-1} + t_{i+1} = 0, t_{i+1} + t_{i+3} = 0 \implies t_{i-1} = t_{i+3}$$ 
where all values $i$ are taken $\pmod n$

For convenience, split this case into 3 subcases: (1) $n = 4a+1$ for some natural $a$, (2) $n = 4a+3$ for some natural $a$, and (3) $n= 4a+2$ for some natural $a$.

If $n =4a+1$, note that from basic modular arithmetic, for all $i$, $$t_i = t_{i+4} = \dots = t_{n+i-1} = t_{i-1} = t_{i+3}$$
Plugging in $i+3$, $t_{i+3} = t_i = t_{i+6}$.
Hence, $t_i = t_{i+6} = t_{i+2} = -t_i$, so $t_i = 0$ for all $i$. Therefore, the columns of $M$ are linearly independent and $M$ is invertible.

If $n = 4a+3$, note again that: $$t_i = t_{i+4} = \dots = t_{n+i-3} = t_{i+1}$$
Plugging in $i+1$, $t_{i+1} = t_{i+2}$
Hence, $t_i = t_{i+1} = t_{i+2} = -t_i$. Therefore, $t_i= 0$ for all $i$. Therefore, the columns of $M$ are linearly independent and $M$ is invertible. 

Finally, if $n = 4a+2$, note that: $$ t_i = t_{i+4} = \dots = t_{n+i -2} = t_{i-2} = t_{i+2}$$
Hence, $t_i = t_{i+2} = -t_i$, so $t_i = 0$ for all $i$. Then, the columns of $M$ must be linearly independent, and $M$ is again invertible.

Therefore, for all cases where $4\not |n,$ $M$ is invertible.
\end{proof}
\begin{theorem}
When $n$ does not divide 4, there exists a unique interior point equilibrium for sufficiently small $\dfrac{(\max_j {w_j} - \min_j {w_j})}{\max_j w_j}$.
\end{theorem}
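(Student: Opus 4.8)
The plan is to read the system of indifference equations from Lemma~1 as a single linear system $A(\vec{w})\,\vec{p}=\vec{b}(\vec{w})$ and then run a perturbation argument around the equal-weight configuration, where Lemma~2 hands us invertibility. Concretely I would let $A(\vec{w})$ be the $n\times n$ matrix whose only nonzero entries in row $i$ are $A_{i,i-1}=w_i$ and $A_{i,i+1}=w_{i+1}$ (indices mod $n$), and set $b_i=2w_{i+1}-w_i$, so that Lemma~1 says precisely that every interior equilibrium $\vec{p}$ solves $A(\vec{w})\vec{p}=\vec{b}(\vec{w})$. Before anything else I would record the converse: since cheetah $i$'s payoff from each of its two pure strategies depends only on its neighbors' probabilities $p_{i-1},p_{i+1}$ and not on $p_i$ itself, any $\vec{p}\in(0,1)^n$ solving this system makes every cheetah indifferent between its two pure strategies given the profile, so each $p_i$ is a best response and $\vec{p}$ is a genuine Nash equilibrium. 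Thus ``interior equilibrium'' and ``interior solution of $A(\vec{w})\vec{p}=\vec{b}(\vec{w})$'' are interchangeable, and it suffices to show the latter exists and is unique.

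Next I would settle the base case of equal weights $\vec{w}=w\mathbf{1}$, where $\mathbf{1}=(1,\ldots,1)$. Here $A(\vec{w})=wM$ with $M$ the matrix of Lemma~2, and $\vec{b}(\vec{w})=w\mathbf{1}$, so the system collapses to $M\vec{p}=\mathbf{1}$. One checks directly that $\vec{p}=\tfrac12\mathbf{1}$ solves it, and since $n$ is not a multiple of $4$, Lemma~2 makes $M$ invertible, so this is the unique solution; it manifestly lies in the interior $(0,1)^n$. This symmetric solution is the anchor point for the perturbation.

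The core step is then a continuity argument. The entries of $A(\vec{w})$ and $\vec{b}(\vec{w})$ are linear, hence continuous, in $\vec{w}$, and $\det A(\vec{w})$ is a polynomial in the weights that is nonzero at $\vec{w}=w\mathbf{1}$ because $\det(wM)\neq 0$. Hence $\det A\neq 0$ on a neighborhood of the equal-weight point, giving existence and uniqueness of $\vec{p}(\vec{w})=A(\vec{w})^{-1}\vec{b}(\vec{w})$ there, and by Cramer's rule $\vec{p}(\vec{w})$ is continuous in $\vec{w}$ with $\vec{p}(w\mathbf{1})=\tfrac12\mathbf{1}$. To connect this to the scale-invariant quantity $(\max_j w_j-\min_j w_j)/\max_j w_j$, I would note that both $A$ and $\vec{b}$ scale linearly under $\vec{w}\mapsto c\vec{w}$, so $\vec{p}$ is unchanged and we may normalize $\max_j w_j=1$; then small heterogeneity means every $w_j\in[1-\epsilon,1]$, i.e.\ $\vec{w}$ is close to $\mathbf{1}$. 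Continuity of $\vec{p}(\vec{w})$ then forces each coordinate to stay near $\tfrac12$, hence inside $(0,1)$, so the unique system solution is interior and therefore is the unique interior equilibrium.

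I expect the only real subtlety to be bookkeeping rather than a deep obstacle: getting the equivalence between interior equilibria and interior system solutions right in both directions (Lemma~1 supplies necessity, the indifference-implies-best-response observation supplies sufficiency), and translating ``small heterogeneity'' into ``weights near equal'' cleanly via scale invariance, so that the neighborhood produced by continuity of $\det A$ and of $\vec{p}$ genuinely contains every weight vector with small enough $(\max-\min)/\max$. The heavy lifting---invertibility at the symmetric point---is already carried out by Lemma~2, so the remaining argument is essentially an open-condition wrapper around it.
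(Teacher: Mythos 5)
Your proposal is correct and follows essentially the same route as the paper: anchor at the equal-weight point where $p_i=\tfrac12$ solves the Lemma~1 system, invoke Lemma~2 for invertibility of the coefficient matrix, and use continuity of the determinant (openness of invertibility) to handle small heterogeneity. In fact your write-up is more complete than the paper's own proof, which leaves implicit the two bookkeeping steps you identify --- that an interior solution of the Lemma~1 system is genuinely an equilibrium (indifference implies every mixture is a best response), and that the perturbed solution $\vec{p}(\vec{w})=A(\vec{w})^{-1}\vec{b}(\vec{w})$ remains in $(0,1)^n$ by continuity.
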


\begin{proof}

Without loss of generality, assume that $\max_j w_j = 1.$
We first claim that an equilibrium exists if $w_i=w_j$ for all $i,j$. Moreover, this equilibrium is $p_i = 1/2$ for all $i$. 

This is true since if every cheetah $i$, for all $i \ne 1$ goes to gazelle $i$ with probablity $1/2$ and gazelle $i+1$ with probability $1/2$, then for all strategies $p_1$ that cheetah $1$ has, the expected winnings $W (p_1)$ by playing $p_1$ is:
$$ (1/2) p_i + 1/2 (1 - p_i) + 1/4 (p_i) + 1/4 (1- p_i) = 3/4 $$
regardless of $p_i$. Hence, this strategy is indeed optimal and the equilibrium $(1/2,1/2,\cdots)$ is the solution of the set of equations in Lemma 1.


Moreover, the determinant of a matrix is a continuous polynomial in the entries, so the set of invertible matrices is open. Therefore, since the matrix $M$ (defined as in Lemma 2) is invertible when weights are equal and $4 \not | n$, the new coefficient matrix should still be invertible when $\max_j w_j - \min_j w_j$ is small enough. Therefore, the solution we found to this system must be the unique set of strategies that satisfies this constraint. Hence, this is the only equilibrium. 
\end{proof}

This result immediately gives an algorithm to obtain the Nash equilibrium, simply by solving the system given in Lemma 1. To better analyze the effect of dispersion in weights on each players strategy, let the weights of the gazelles be in arithmetic sequence with common difference $\epsilon$, and let gazelle $\ceil{n/2}$ WLOG have weight 1. By solving our model for various cases of $\epsilon$, we find that the average average cheetah payoff decreases as $\epsilon$ increases, as shown below when $n=11$ (left). However, the expected payoffs for specific cheetahs might increase or decrease, as shown when $n=11$:

\includegraphics[scale=0.2]{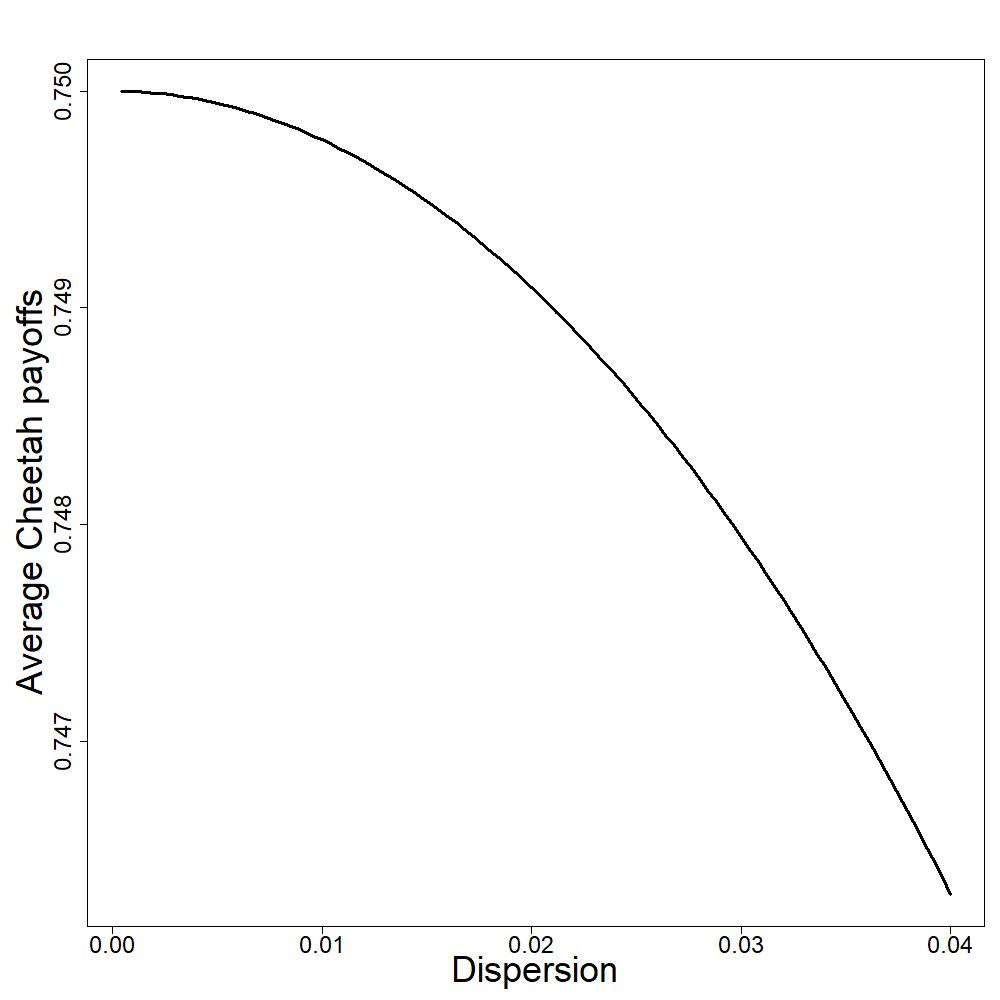} \includegraphics[scale=0.2]{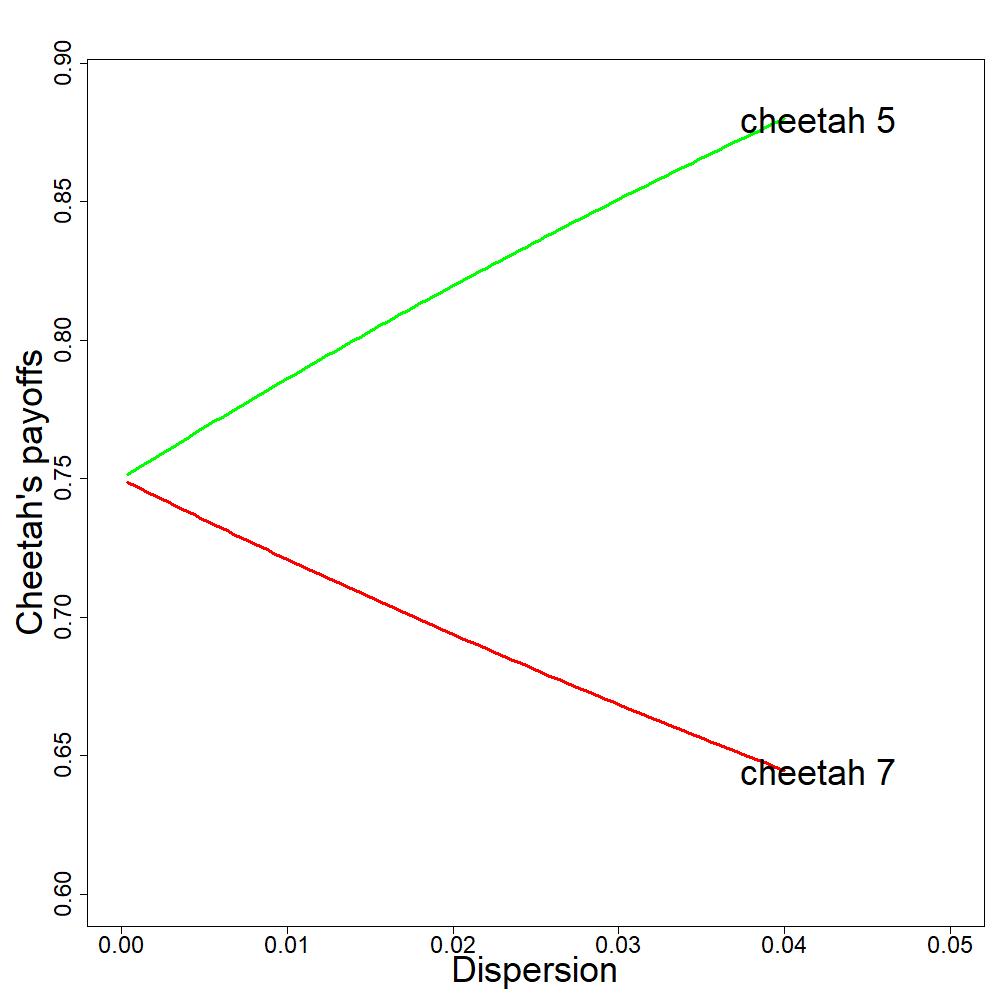}

\section{\label{random-vs-regular}Random vs. Regular topology}

We analyze two network topologies in this section, the $k-$ regular topology and $k-$ random topology, defined as follows: (i) In the $k-$regular topology, each cheetah $i$ is connected to $k$ gazelles, with these $k$ gazelles being $i,i+1,\cdots,i+k-1$ (mod $m$) in the case of regular topology, and (ii) In the $k$-random topology, while each cheetah $i$ is still connected to $k$ gazelles, these $k$ gazelles are selected randomly from the total $m$ gazelles. It is important to note that each cheetah \textit{is only assumed to be aware of the $k$ gazelles that she is connected to, and not of the gazelles any other cheetah may access}.

We shall assume that the gazelle weights are identical. Moreover, attach some cost $v$ to hunting a particular gazelle. All other assumptions of the original model are preserved. 

Let each cheetah not hunt at all with probability $1-p$, and let the probability that each cheetah hunts a given gazelle accessible to it be $\frac{p}{k}$. We seek to find the conditions where this is an equilibrium. 

\begin{lemma}
    In a regular small-world topology as described, the equilibrium probability $p^\ast$ of hunting is: 
$$ p^\ast = \begin{cases}
1 & \text{if }v < \left(1-\frac{1}{k}+\frac{r}{k}\right)^{k-1}\\
\frac{k(1-v^{1/(k-1)})}{1-r} & \text{otherwise}
\end{cases}$$

Moreover, the chance that a given gazelle survives is: 
    $$ \left(1 - \frac{p^\ast}{k}\right)^k   $$
\end{lemma}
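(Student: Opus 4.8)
The plan is to exploit the circulant symmetry of the $k$-regular ring (taking $n=m$, so that cheetah $i$ is connected to gazelles $i,\dots,i+k-1 \pmod m$ and, dually, gazelle $j$ is accessible to exactly the $k$ cheetahs $j-k+1,\dots,j$). Under the proposed symmetric profile every cheetah hunts each of its $k$ accessible gazelles with probability $p/k$ and abstains with probability $1-p$. Because the accessibility pattern is translation-invariant, the competitive environment a cheetah faces is identical at each of its gazelles, so it suffices to analyze a single cheetah deciding at a single accessible gazelle $j$ and to impose indifference there. Abstaining yields payoff $0$ (a win requires positive effort, so an uncontested gazelle one does not hunt contributes nothing).

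First I would compute the expected payoff of hunting gazelle $j$. Among the $k-1$ other cheetahs that can reach $j$, each independently hunts $j$ with probability $p/k$, so the number $T$ of rival hunters at $j$ is $\mathrm{Binomial}(k-1,\,p/k)$. By the tie-breaking product rule $g$, a cheetah tied with $t$ rivals collects $w_j r^{t}=r^{t}$ (weights normalized to $1$). Hence the gross value of hunting is $E[r^{T}]$, and the binomial theorem gives
\[
E[r^{T}]=\Bigl(1-\tfrac{p}{k}+\tfrac{p}{k}r\Bigr)^{k-1}=\Bigl(1-\tfrac{p(1-r)}{k}\Bigr)^{k-1}.
\]
Subtracting the hunting cost $v$ and comparing with the abstention payoff $0$ gives the indifference condition $\bigl(1-p(1-r)/k\bigr)^{k-1}=v$, which I would solve for $p$ to obtain $p^\ast=k\bigl(1-v^{1/(k-1)}\bigr)/(1-r)$.

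Next I would treat the corner solution. The map $p\mapsto\bigl(1-p(1-r)/k\bigr)^{k-1}-v$ is strictly decreasing on $[0,1]$, so if hunting is still strictly profitable at $p=1$, i.e. $v<(1-1/k+r/k)^{k-1}$, then no interior indifference point exists in $[0,1]$ and every cheetah strictly prefers to hunt, forcing $p^\ast=1$; otherwise the interior root above lies in $[0,1]$ (checking endpoints confirms $0\le p^\ast\le 1$ for $(1-1/k+r/k)^{k-1}\le v\le 1$) and is the equilibrium. Setting $p=1$ in the formula recovers exactly the threshold $v=(1-1/k+r/k)^{k-1}$, so the two cases match continuously.

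Finally, the survival probability is immediate: gazelle $j$ is untouched precisely when each of its $k$ potential hunters abstains from $j$, an independent event of probability $1-p^\ast/k$ for each, giving $(1-p^\ast/k)^k$. The main obstacle is conceptual rather than computational: one must justify that the symmetric profile is a genuine equilibrium by verifying simultaneous indifference across all $k$ accessible gazelles together with the abstain option, which is exactly where translation symmetry of the ring is essential; and one must read the tie payoff off the product $g$ correctly (giving $r^{t}$ for $t$ rivals, hence $r^{k-1}$ for a full $k$-way tie) rather than the slightly inconsistent $r^{k}$ quoted in the model description.
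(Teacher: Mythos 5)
Your proposal is correct and follows essentially the same route as the paper's own proof: posit the symmetric profile, compute the expected gross payoff of hunting as $E[r^T]=\bigl(1-\tfrac{(1-r)p}{k}\bigr)^{k-1}$ with $T\sim\mathrm{Binomial}(k-1,p/k)$ via the binomial theorem, impose indifference with the zero abstention payoff to get the interior root, and handle the corner case $v<(1-\tfrac1k+\tfrac rk)^{k-1}$ separately. Your treatment is in fact slightly more careful than the paper's (explicit monotonicity argument for the corner case, verification that the interior root lies in $[0,1]$, and the correct reading of the tie payoff as $r^{t}$ for $t$ rivals, which is what the paper's own binomial formula implicitly uses), but these are refinements of the same argument rather than a different one.
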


\begin{proof}
Assume that each cheetah hunts some gazelle with probability $p$ when in equilibrium. Evidently, cheetah 1's expected winnings $W$ by sending all resources to $1$ is:
$$ W = (1 - p + \frac{k-1}{k} p )^{k-1} + \binom{k}{1} (1 - p + \frac{k-1}{k} p )^{k-2} \frac{1}{k} r^2 \dots$$
By the binomial theorem,
$$ W(p) = (1 - p + \frac{k-1}{k} p + \frac{rp}{k})^{k-1}  = (1 - \frac{1-r}{k} p)^{k-1}$$

 Now, if $w \ge W(1)$, any given cheetah mixes on the strategies of not hunting and hunting, each with positive, probability. Hence, in equilibrium:
$$0 = W(p) - v $$
$$ \implies v = (1 - \frac{1-r}{k} p)^{k-1}$$ 
$$ \implies p = k(\frac{1 - v^{1/{k-1}}}{1-r}) $$
Hence the probability that no cheetahs hunt a given gazelle (and hence that the gazelle survives) is:
$$(1 - \frac{p}{k})^k$$

 On the other hand, if $v <  W(1)$, then it is optimal to leave with probability $p = 1$. Therefore, the probability that no cheetahs hunt a given gazelle is:
$(1- 1/k)^k$, which is the survival probability.
\end{proof}

We also claim:
\begin{lemma}
 In a random topology, regardless of $k$, the equilibrium probability $p^\ast$ of hunting is: 
$$ p^\ast = \begin{cases}
1 & \text{if }v < \left(1-\frac{1}{m}+\frac{r}{m}\right)^{n-1}\\
\frac{m(1-v^{1/(n-1)})}{1-r} & \text{otherwise}
\end{cases}$$
Moreover, when $m = n,$ as $ n\rightarrow \infty$, $p^\ast \rightarrow \frac{-\ln v}{1-r}$.

The chance that a given gazelle survives is: $$ \left(1 - \frac{p^\ast}{m}\right)^n =  
\begin{cases}
\left(1 - \frac{1}{m}\right)^n & \text{if }v < \left(1-\frac{1}{m}+\frac{r}{m}\right)^{n-1}\\
(1 - \frac{1-v^{1/(n-1)}}{1-r})^n & \text{otherwise}
\end{cases}
$$
\end{lemma}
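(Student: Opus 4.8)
The plan is to follow the proof of the preceding lemma essentially verbatim, changing only the object that encodes competition, namely the expected--winnings function $W(p)$; once $W(p)$ is recomputed for the random network, the equilibrium probability, the survival probability, and the large--$n$ limit all drop out of the same algebra used in the regular case. So the whole task reduces to getting $W(p)$ right.

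First I would fix the symmetric ansatz: in equilibrium every cheetah hunts with total probability $p$ and, given that it hunts, chooses uniformly among the $k$ gazelles in its own (randomly drawn) neighborhood. The key computation is, from the point of view of cheetah $1$ that has decided to contest some gazelle $j$ it is connected to, the probability that a fixed rival also contests $j$. Since each rival draws its $k$--element neighborhood independently and uniformly, gazelle $j$ lies in that neighborhood with probability $k/m$, and conditioned on this the rival hunts $j$ with probability $p/k$; marginalizing gives $\frac{k}{m}\cdot\frac{p}{k}=\frac{p}{m}$. Because the $n-1$ rival neighborhoods and hunting choices are mutually independent, and independent of cheetah $1$'s own neighborhood, the number of rivals contesting $j$ is $\mathrm{Binomial}\!\left(n-1,\,p/m\right)$. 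Accounting for ties exactly as in the previous lemma, a $(t+1)$--way tie contributes the discount factor $r^{t}$, so summing over $t$ and invoking the binomial theorem collapses the series to
$$ W(p)=\left(1-\frac{p}{m}+\frac{rp}{m}\right)^{n-1}=\left(1-\frac{(1-r)p}{m}\right)^{n-1}, $$
which is precisely the regular--case formula with $k-1$ replaced by $n-1$ and $p/k$ by $p/m$. Imposing the indifference condition $W(p)=v$ and solving yields $p^{\ast}=\frac{m\,(1-v^{1/(n-1)})}{1-r}$, while whenever $v<W(1)=\bigl(1-\tfrac{1}{m}+\tfrac{r}{m}\bigr)^{n-1}$ the best response is pushed to the corner $p^{\ast}=1$. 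For the survival probability I would compute the chance that none of the $n$ cheetahs contests a fixed gazelle; by the same independence and the marginal rate $p^{\ast}/m$, this is $(1-p^{\ast}/m)^{n}$, which specializes to the two displayed cases. For the limit when $m=n$, I would expand $v^{1/(n-1)}=\exp\!\bigl(\tfrac{\ln v}{n-1}\bigr)=1+\tfrac{\ln v}{n-1}+o(1/n)$, so that $p^{\ast}=\tfrac{n}{1-r}\bigl(1-v^{1/(n-1)}\bigr)\to\tfrac{-\ln v}{1-r}$.

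The main obstacle is conceptual rather than computational: I must justify that cheetah $1$, who by assumption is aware only of its own neighborhood, is entitled to treat each rival as contesting a given gazelle independently with the \emph{marginal} probability $p/m$. This is where the incomplete--information structure matters, and it requires checking both that averaging over the unobserved rival neighborhoods is the correct Bayes--Nash expectation and that the independence of the draws makes the co--hunting indicators genuinely independent, so that the clean binomial collapse is legitimate. The exchangeability built into the random model (identical gazelle weights and symmetric neighborhood draws) is what guarantees that the uniform split is itself a best response and that a single scalar $p$ describes the entire equilibrium profile; granting this, every remaining step is the routine algebra already carried out for the regular topology.
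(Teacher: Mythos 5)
Your proposal is correct and follows essentially the same route as the paper's own proof: you compute the marginal probability $p/m$ that a rival contests a fixed gazelle (the paper obtains the same $k/m$ factor via the ratio $\binom{m-1}{k-1}/\binom{m}{k}$), collapse the binomial sum over ties to $W(p)=\bigl(1-\tfrac{(1-r)p}{m}\bigr)^{n-1}$, and then apply the same indifference/corner-case analysis, with only a cosmetic difference in the limit computation (Taylor expansion of $v^{1/(n-1)}$ versus the paper's use of L'Hospital's rule). Your explicit flagging of the independence/Bayes--Nash justification is a point the paper leaves implicit, but it does not change the argument.
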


 \begin{proof}
Assume that any given cheetah hunts with probability $p$, in equilibrium. 
Note that the probability that any given cheetah hunts gazelle $1$ is: $$ \left(\frac{p}{k} \right) * (\text {Chance that at least one of the $k$ gazelles picked is 1}) = \frac{p}{k} \cdot \frac{\dbinom{m-1}{k-1}}{\dbinom{m}{k}} = \frac{p}{k} \cdot \frac{k}{m} = \frac{p}{m}$$
The chance that $t$ cheetahs send resources to gazelle $1$ is clearly $$\frac{p}{m}^t \left(1 - \frac{p}{m} \right)^{n-1-t}$$

 Hence, if cheetah 1 sends all resources to $1$, the expected payoff $W$ satisfies: 
$$ W(p) = \sum_{t=0}^n \frac{p}{m}^t \left(1 - \frac{p}{m} \right)^{n-t} (r^t) =  \sum_{t=0}^n \frac{pr}{m}^t \left(1 - \frac{p}{m} \right)^{n-t} = \left(1 - \frac{p}{m} +  \frac{pr}{m} \right)^ {n-1} $$

Hence,
$$ W = \left(1 - \frac {1-r}{m} p\right) ^ {n-1} $$
 This demonstrates that the expected payoff is not dependent on $k$.

Now, there are 2 cases for $v$:

1) $ v < W(1)$. In this case, the payoff will be higher if the cheetah hunts with probability $1$, so $p = 1$. 

2) On the other hand, if $v \ge W(1)$, for equilibrium to be achieved, $v= W(p)$, so:
$$ p = \frac{m}{1-r}(1 - v^{1/{(n-1)}}) $$ As $n \to \infty$ and $m \to n$, this approaches:
$$ \frac{1}{1-r} \lim_{x \to 0} \frac{1 - v^x}{x} $$
By L'Hospital's Rule, this is:
$$ p = \frac{1}{1-r} (\lim_{x \to 0} \frac{1 - v^x}{x}) = \frac{-1}{1-r}(\lim_{x \to 0} \frac{ v^x  \ln v}{1}) = \frac{-\ln v}{1-r}$$

 So, the probability that gazelle $j$ survives (or the probability that no cheetah hunts gazelle $j$) is:
$$(1 - \frac{p}{m})^n = (1 - \frac{1-v^{1/(n-1)}}{1-r})^n$$ \end{proof}

From now on, we assume $m \to n$ so as to compare the random and regular topology survival rates.
Let the survival rate be $S_{rnd}$ in a random topology and $S_{reg}$ in a regular topology. Similarly, let $A_{rnd}$ and $A_{reg}$ represent the expected payoff gained by hunting a particular gazelle in the random and regular topologies, respectively.

Consider the following cases:

\noindent Case 1) $v < A_{rnd}, A_{reg}$

In this case, $S_{reg} = (1 - \frac{1}{k})^k$, while $S_{rnd} = (1 - \frac{1}{n})^n$. Hence, $S_{reg} < S_{rnd}$

\noindent Case 2)$v \in (A_{rnd}, A_{reg})$

Consider the graph of survival probability against $v$, with all other variables fixed. In the random case, the graph is constant until some critical value $v_1$, while in the regular case, the graph is constant until some value $v_2$. If, at some point, $S_{rnd} < S_{reg}$, even though $S_{reg} < S_{rnd}$ holds at $v_1, v_2$, $S_{reg}$ must cross a constant value twice. This is not possible, as it is increasing. Hence, $S_{reg} < S_{rnd}$

\noindent Case 3) $v \geq A_{reg}, A_{rnd}$

Clearly, in the case of a regular topology, the probability of a gazelle surviving is: $f(k)$, while in the case of a random topology, the probability of a gazelle surviving is $f(n)$, where
$f(x) = (1 - \frac{1- v^{1/(x-1)}}{1-r})^x$.

 Since $f(x)$ is increasing, we have that: \textit{the survival rate is greater in a random topology than in a ``small-world'' regular topology}

The graph below shows $f(x) = (1 - \frac{1- v^{1/(x-1)}}{1-r})^x$ for given values of $v$ and $r$. As can be seen, in the last case, survival probability increases as the number of accessible gazelles decreases.

Moreover, by a simple casework argument, this implies that the average payoff of cheetahs is higher in regular, ``small-world'' networks. This is proved in the supplementary appendix, as Lemma 5.

\begin{center}
    \includegraphics[scale=0.3]{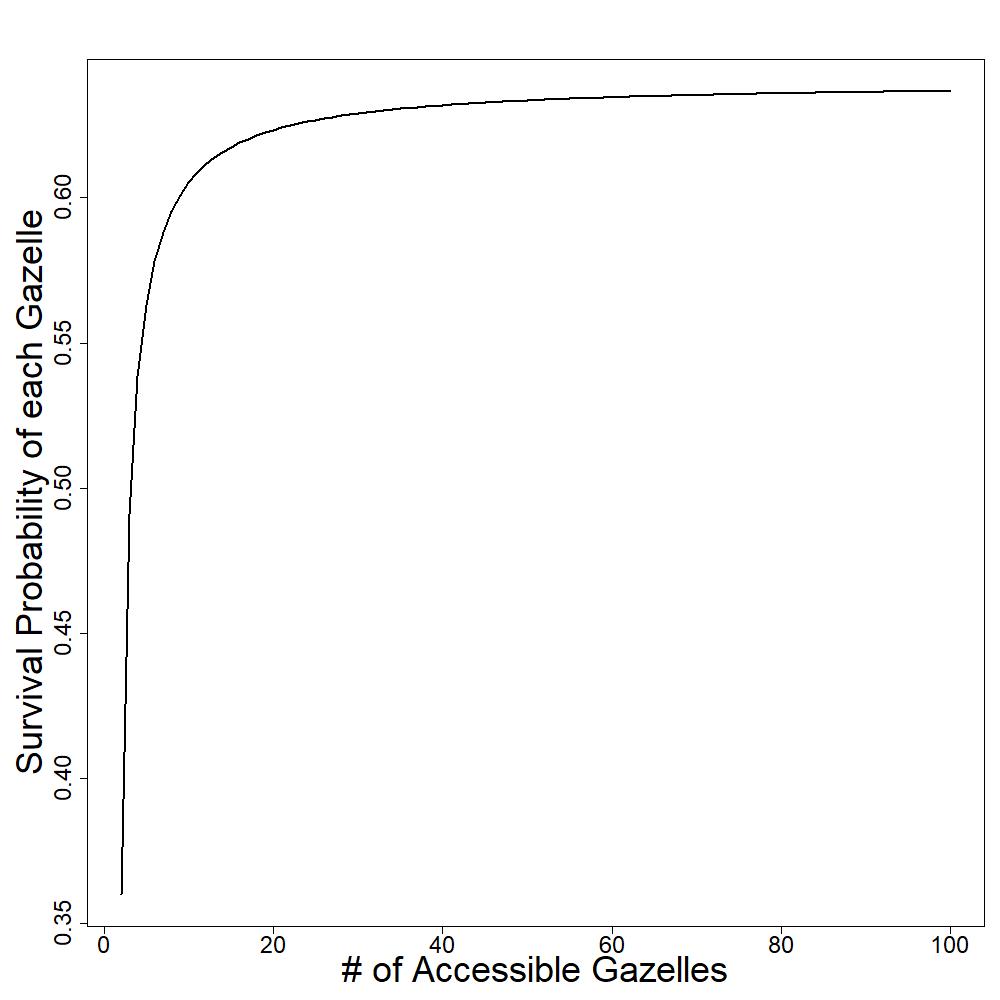}
\end{center}




\section {\label{discussion-and-conclusions} Discussion \& Conclusions}
This study of the network Colonel Blotto game has yielded three main results: First, the existence and uniqueness of interior point equilibrium solutions for a certain class of regular network topologies is determined by the residue (mod 4) of the number of vertices of the network. This result is technically intriguing, since it seems to make a connection between what would, at first sight, appear to be a purely number theoretical property of a network and existence/uniqueness of equilibrium of a resource allocation contest.

Secondly, we proved that if the accessibility network is randomly generated, the survival rate is higher compared to the case where the accessibility network is regular. In very rough terms, if the game is made more "local" through restrictions on movement, where each field can only be accessed by a small set of players, then, many more fields will be ``invaded'' by the players on expectation. Thus, while each individual cheetah may choose to have a greater "range" (i.e., be able to hunt more gazelles), cheetahs are better off when their ranges are collectively constrained.

Finally, we discovered that heterogeneity in field weights leads to lower average payoffs and higher (average) survival rates (i.e., more fields that are "untouched"). While the second part of this is somewhat more intuitive (of course fields with extremely low valuation are avoided), the first part is quite unexpected.



The current study focused on an ``all-or-nothing'' variant (where each player had a single indivisible unit resource). In the future, it would be useful to extend this to cases where strengths of each player vary. This will allow the model to be applied to various real-life strategy games where resources must be allocated in a competition against an unknown competitor. 

For example, in online phishing scams (which have been studied using the general network-free Colonel Blotto game in the past by \cite{chia2011colonel}), scammers are competing to send the maximum number of phishing emails to various email addresses, most of which are picked at random. The study of such scams would be facilitated by the network Blotto model introduced in this paper, since scammers have access to only a certain subset of email records, and as a result can only send emails to a certain fraction of accounts. Additionally, future research should change the tie-breaking function to allow analysis of variants where ties between players are broken randomly (i.e., one of the players that sends the most resources is designated, uniformly at random, a winner). This would allow more accurate analysis of real-life discrete scenarios, such as various situations in evolutionary biology, where species compete over a subset of the total number of niches, and no two species can occupy the same niche in the long term.

\bibliography{blotto}

\section*{Supplementary Appendix} 
We prove the following claim to confirm the main prediction of Hypothesis 1:

\begin{lemma}
    When $m \to n$, the equilibrium average payoff in the random case is independent of $k$, and is not more than the  equilibrium average payoff in the regular case, for any regular accessibility network.

\end{lemma}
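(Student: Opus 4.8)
The plan is to reduce the entire comparison to a single scalar monotonicity fact. First I would note that, by symmetry, every cheetah has the same expected equilibrium payoff, so the ``average payoff'' is simply this common value; call it $\Pi$. Writing $W$ for the expected value a cheetah collects by hunting a gazelle (the quantity computed in Lemmas 3 and 4) and $v$ for the cost, the net payoff is $\Pi = p^{\ast}\bigl(W(p^{\ast}) - v\bigr)$. The key simplification is that the two equilibrium regimes collapse this into one clean expression: in the mixing regime $v \ge W(1)$ the indifference condition $W(p^{\ast}) = v$ forces $\Pi = 0$, while in the corner regime $v < W(1)$ we have $p^{\ast} = 1$ and $\Pi = W(1) - v$. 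Hence $\Pi = \max\{\,W(1) - v,\ 0\,\}$, a quantity that depends on the topology only through the single threshold $W(1)$ and is nondecreasing in it.

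Next I would dispatch the first claim. By Lemma 4, in the random case $W_{rnd}(p) = \bigl(1 - \tfrac{1-r}{m}p\bigr)^{n-1}$ and the equilibrium $p^{\ast}$ depend only on $m, n, v, r$, not on $k$. Therefore $\Pi_{rnd} = \max\{\,W_{rnd}(1) - v,\ 0\,\}$ is independent of $k$, exactly as asserted. It then remains to show $\Pi_{rnd} \le \Pi_{reg}$. Because $\Pi = \max\{W(1)-v,0\}$ is nondecreasing in the threshold $W(1)$, this will follow for every $v$ as soon as I establish $W_{reg}(1) \ge W_{rnd}(1)$. By Lemmas 3 and 4 these thresholds are $A_{reg} = \bigl(1-\tfrac{1-r}{k}\bigr)^{k-1}$ and $A_{rnd} = \bigl(1-\tfrac{1-r}{n}\bigr)^{n-1}$; the same binomial computation yields $A_{reg}$ for \emph{any} $k$-regular accessibility network, since each gazelle is then contested by exactly $k-1$ other cheetahs, each entering with probability $p/k$. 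Both thresholds are the single function $f(x) = \bigl(1-\tfrac{c}{x}\bigr)^{x-1}$ with $c = 1-r$, evaluated at $x=k$ and $x=n$. Since $k \le m = n$, the whole lemma reduces to proving that $f$ is nonincreasing.

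That monotonicity is the only real obstacle, and I expect to handle it by logarithmic differentiation. Setting $u = c/x \in (0,1)$ and differentiating $\ln f(x) = (x-1)\ln\bigl(1 - \tfrac{c}{x}\bigr)$ gives $\tfrac{d}{dx}\ln f(x) = \ln(1-u) + \tfrac{c(x-1)}{x(x-c)}$. I would then bound the two terms separately: $\ln(1-u) < -u$, while $\tfrac{c(x-1)}{x(x-c)} = u\cdot\tfrac{x-1}{x-c} < u$ because $c < 1$ makes $x-c > x-1$. Adding these shows the derivative is strictly negative, so $f$ is strictly decreasing and $A_{reg} = f(k) \ge f(n) = A_{rnd}$. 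Combined with the monotonicity of $\Pi$ in its threshold, this yields $\Pi_{reg} \ge \Pi_{rnd}$ for every $v$ and every $k \le n$, completing both halves of the statement. The remaining pieces—the indifference reduction of paragraph one and the $k$-independence—are routine bookkeeping, so the entire weight of the argument rests on the elementary inequality $\ln(1-u) + \tfrac{c(x-1)}{x(x-c)} < 0$.
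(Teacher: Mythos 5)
Your proof is correct, and its skeleton matches the paper's: in the mixing regime the indifference condition forces the net equilibrium payoff to zero, in the corner regime the payoff is $W(1)-v$, and the whole comparison reduces to the monotonicity of $f(x)=\left(1-\frac{1-r}{x}\right)^{x-1}$. However, you package this differently, and more robustly, in two ways. First, where the paper splits into three cases according to the position of $v$ relative to $f(n)$ and $f(k)$, you collapse everything into the single identity $\Pi=\max\{W(1)-v,\,0\}$ together with the observation that this is nondecreasing in the threshold $W(1)$; this absorbs the paper's Cases 1--3 into one line and makes the $k$-independence of the random-topology payoff immediate from Lemma 4. Second, and more substantively, the paper merely \emph{asserts} that $f$ is decreasing (its Case 1 even contains the typo ``$f(k)<f(n)$,'' which contradicts the decreasingness invoked two lines later), whereas you actually prove it by logarithmic differentiation, using $\ln(1-u)<-u$ and $\frac{c(x-1)}{x(x-c)}\le u$. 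One small touch-up there: the paper allows $r=0$, in which case $c=1-r=1$, so $x-c=x-1$ and your second bound holds with equality rather than strictly; the strict inequality $\ln(1-u)<-u$ still makes the derivative negative, so nothing breaks, but your sentence ``$c<1$ makes $x-c>x-1$'' should be weakened to $c\le 1$. Net effect: your route proves the same statement by the same underlying mechanism, but supplies the one analytic fact the paper leaves unproved, which is exactly where the entire weight of the lemma rests.
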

\begin{proof}
Let $p$ be the equilibrium probability of hunting, for every cheetah. Let $f(k) = (1 - 1/k + r/k)^{k-1}$.

Firstly, note that the number of cheetahs $N$ that a given cheetah ties with if she visits a given gazelle satisfies $N \sim \text{Binom}{(n, p/k)}$

Hence, the expected net utility by hunting a given gazelle is:
\begin{equation}
    -v+ \mathbb{E}(r^N) = -v + \mathbb{E} (e^{N \ln r}) = -v + (1 - \frac{p}{k} + \frac{rp}{k})^{k-1} 
\end{equation}

The last line derives from the fact that $\mathbb{E} (e^{Nt})$ is the moment generating function for a binomially-distributed random variable $N$

Hence, in a regular topology:
$$ p = \begin{cases}
1; \hspace{8pt} v < f(k) \\
p^* = \dfrac{k (1- v^{1/(k-1)}) }{1-r}; \hspace{8pt} v \ge f(k) \\
\end{cases} $$

On the other hand, in a random toplogy, when $m = n$,  $N \sim \text{Binom} (n-1, p/n)$, so the expected net utility by hunting a given gazelle is:
$$-v + \mathbb{E} (r^N) = -v + \mathbb{E}(e^{N \ln r}) = -v + (1 - \frac{p}{n} + \frac{rp}{n})^{n-1} $$

Therefore, in a random topology, from Lemma 4:
$$ p = \begin{cases}
1; \hspace{8pt} v < f(n) \\
p^* = \dfrac{n (1- v^{1/(n-1)}) }{1-r}; \hspace{8pt} v \ge f(n) \\
\end{cases} $$

Hence, we have 3 cases for the relative position of $v$:

1) $v < f(n)$:
In this case, since $f(k) < f(n)$, $p_{reg} = 1$, and $p_{rnd} = 1$. 

So, the expected net winnings per cheetah in the random topology is:
$$W_{rnd} = \mathbb{E}(r^N) - v = ( 1- \frac{1-r}{n})^{n-1}- v $$ 

The expected net winnings in a regular topology is:
$$ W_{reg} = \mathbb{E}(r^N) - v = ( 1- \frac{1-r}{k})^{k-1}- v$$

As $f(x) = (1- \frac{1-r}{x})^{x-1}$ is decreasing, $W_{reg} > W_{rnd}$

2) $v \geq f(k)$
In this case, a cheetah in the random topology has expected net winnings:
$$W_{rnd} = p_{rnd}(\mathbb{E}(r^N)) - p_{rnd}v = p_{rnd} (1 - \frac{p_{rnd}}{n} + \frac{rp_{rnd}}{n})^{n-1} - p_{rnd}v = 0$$

In the regular topology, the expected net payoff for a given cheetah is:
$$W_{reg} = p_{reg}(\mathbb{E}(r^N)) - p_{reg}v = p_{reg} (1 - \frac{p_{reg}}{k} + \frac{rp_{reg}}{k})^{k-1} - p_{reg}v = 0$$

So, the expected payoff per cheetah is equal in the random and regular topologies.

3) $f(n) < v < f(k)$

From the previous two cases, $W_{reg} = \mathbb{E}(r^N) - v = ( 1- \frac{1-r}{k})^{k-1}- v \geq 0$, while $W_{rnd} = 0$.

Putting all of these cases together, $W_{reg} \geq W_{rnd}$. The lemma follows.
\end{proof}

\end{document}